\documentclass[10pt,twocolumn,twoside]{IEEEtran}
\usepackage[utf8]{inputenc}
\usepackage{setspace}

\usepackage[acronym]{glossaries}
\usepackage{subcaption}  

\usepackage{tikz}
\usepackage{pgfplots}
\usepgfplotslibrary{groupplots}
\pgfplotsset{compat=newest}
\usepgfplotslibrary{colorbrewer, groupplots, patchplots}
\pgfplotsset{
    compat=newest,
    legend style={font=\footnotesize, fill opacity=0.7,  draw opacity=1, text opacity=1, draw=white!15!black, legend cell align=left, align=left}, 
    width=0.8\columnwidth, 
    scale only axis,
    height=4cm,
    yminorticks=false,
    xminorticks=false,
    label style={font=\small},
    title style={font=\small},
    tick align=outside,
    tick pos=left,
    tick style={color=black},
    tick label style={font=\footnotesize},
    grid style={line width=.1pt, draw=gray!20},
    major grid style={line width=.1pt,draw=gray!20},
    plot coordinates/math parser=false 
}

\usetikzlibrary{shapes,positioning}
\usepackage{amsmath}
\usepackage{amsthm}
\usepackage{amssymb}
\usepackage{amsfonts}
\usepackage{dsfont}
\usepackage{algorithm}
\usepackage[noend]{algpseudocode}
\usepackage[subnum]{cases}
\usepackage[nocomma]{optidef}
\usepackage[normalem]{ulem}
\usepackage{mathtools}
\DeclarePairedDelimiter\ceil{\lceil}{\rceil}

\DeclareMathAlphabet{\mathcal}{OMS}{cmsy}{m}{n}
\SetMathAlphabet{\mathcal}{bold}{OMS}{cmsy}{b}{n}

\newtheorem{theorem}{Theorem}
\newtheorem{remark}{Remark}
\pagenumbering{gobble}

\newacronym{ap}{AP}{Access Point}
\newacronym{fl}{FL}{Federated Learning}
\newacronym{iot}{IoT}{Internet of Things}
\newacronym{mac}{MAC}{Medium Access Control}
\newacronym{ml}{ML}{Machine Learning}
\newacronym{ps}{PS}{Parameter Server}
\newacronym{ra}{RA}{Random Access}
\newacronym{rv}{RV}{Random Variable}
\newacronym{sv}{SV}{Shapley Value}
\newacronym{sgd}{SGD}{Stochastic Gradient Descent}

\newacronym{ue}{UE}{User Equipment}

\newcommand{\rv}[1]{\textcolor{black}{#1}}

\title{Time-constrained Federated Learning (FL) in Push-Pull IoT Wireless Access}
\author{Van Phuc Bui, Junya Shiraishi, Petar Popovski, Shashi Raj Pandey%
\\
Department of Electronic Systems, Aalborg University, Denmark\\
Emails: \{vpb, jush, petarp, srp\}@es.aau.dk\thanks{This work was supported partly by the Villum Investigator Grant ``WATER" from the Velux Foundation, Denmark, and partly by the Horizon Europe SNS ``6G-XCEL" project with Grant 101139194. The work of J. Shiraishi was supported by Horizon Europe Marie Sk{\l}odowska-Curie Action (MSCA) Postdoc Fellowships with grant No~101151067.}}
\begin{document}
\maketitle
\begin{abstract}
Training a high-quality Federated Learning (FL) model at the network edge is challenged by limited transmission resources. Although various device scheduling strategies have been proposed, it remains unclear how scheduling decisions affect the FL model performance under temporal constraints. This is pronounced when the wireless medium is shared to enable the participation of heterogeneous Internet of Things (IoT) devices with distinct communication modes: (1) a scheduling (\emph{pull}) scheme, that selects devices with valuable updates, and (2) random access (\emph{push}), in which interested devices transmit model parameters. This work investigates the interplay of push-pull interactions in a time-constrained FL setting, where the communication opportunities are finite, with a utility-based analytical model. Using real-world datasets, we provide a performance tradeoff analysis that validates the significance of strategic device scheduling under push-pull wireless access for several practical settings. The simulation results elucidate the impact of the device sampling strategy on learning efficiency under timing constraints.  
\end{abstract}
 \begin{IEEEkeywords}
 federated learning, pull-based communications, time constraints, medium access control, data valuation
 \end{IEEEkeywords}
\vspace{-5pt}
\section{Introduction}
\gls{fl}~\cite{mcmahan2017communication} leverages distributed data and decentralized computing to train a learning model without exchanging the raw data. FL involves a tightly coupled iterative process, where the devices undergo local training and exchange the updates for model aggregation through frequent communication with a \gls{ps}. The privacy-preserving feature of \gls{fl} is desirable in a variety of applications, i.e., edge intelligence \cite{nguyen2021federated}, semantic communications \cite{kountouris2021semantics}, or fast inference capabilities for downstream tasks. 

The device selection problem and the communication bottleneck pose the most significant challenges in \gls{fl}, particularly when the communication resources are shared amongst several resource-constraints devices~\cite{kairouz2021advances}. 
In order to realize the improvement of the model accuracy under different constraints due to system and data level heterogeneity, \cite{singhal2024greedy} introduced approaches for \emph{data valuation} and \emph{strategic sampling}. 
This enables the \gls{ps} to strategically select the subsets of \glspl{ue} having useful updates. 
However, doing \gls{fl} in the heterogeneous \gls{iot} networks further adds unique challenges and constraints, which are coupled due to the availability of communication resources, and heterogeneity in \gls{ue} computing capability and data quality~\cite{kairouz2021advances, singhal2024greedy}. The data valuation approach considered in~\cite{singhal2024greedy} brings two issues: \emph{(i)} the \gls{ps} cannot deal with stragglers, which are the devices with slow computation capabilities~\cite{reisizadeh2022straggler}, without extending the periodicity of aggregation. This impacts the FL training process and timely inference to support time-critical applications at the network edge, such as autonomous driving and industrial automation \cite{zhang2022federated}; \emph{(ii)} the distribution of the local training latency is unknown, which makes the whole training process intractable. 
The work in~\cite{mcmahan2017communication, pandey2020crowdsourcing} applied a synchronous model update method, in which the \gls{ps} waits for the slowest \gls{ue} to complete its local training and transmit the model parameters, which is seen as the worst case scenario. Therefore, this approach can not be applicable for time-shared \gls{fl} systems considered here, as this only extends the model aggregation period.

We address these issues by integrating pull- and push-based communication~\cite{cavallero2024coexistence} for a \gls{fl} setup. This enables the \gls{ps} to aggregate the local update transmitted in a push-based manner. In addition, such a communication paradigm allows the \gls{ps} to directly ask the model parameter from the subset of \glspl{ue}, which is likely to contribute to improving the global model accuracy through a pull-based approach~\cite{singhal2024greedy}. 
However, this also includes unique challenges caused by the nature of \gls{ra} for the model update based on the push-based communication. Specifically, the transmission success probability of local update in the push-based communication affects PS's scheduling decisions for the \glspl{ue} during the pull-based communication period. The novelty in this work is the strategic aspect of push-pull access of \gls{iot} \glspl{ue}, which is based on the data valuation and aims to achieve improvement in the FL generalization performance. Our contributions are threefold: \emph{(i)} we introduce the push-pull system to the \gls{iot} \gls{fl} setup, adapting the utility-based \emph{strategic} \gls{ue} scheduling approach; \emph{(ii)} we characterize the system level performance of push-pull systems for \gls{fl} setup, in terms of training accuracy and latency cost; \emph{(iii)} using numerous experiments on real-world datasets, we validate that contribution-based device scheduling enables fast knowledge acquisition and timely inference in resource-constrained settings.  

\emph{Notation:} $[m]$ denotes a set with $m$ elements; $|[m]|$ denotes the cardinality of the set $[m]$; $\mathbb{E}[\cdot]$ denotes the expectation operator of a \gls{rv}; $\mathrm{Pr}(x)$ denotes probability of happening an event $x$; $\mathbb{R}_+$ denotes a set of non-negative real numbers; $\mathds{1}_{\{\cdot\}}$ is an indicator function.

\section{Setting and Problem Definition}\label{sec:problem_def}
We consider a scenario where $K$ \glspl{ue} and a \gls{ps} train a global \gls{fl} model in a push-pull manner.
In the \emph{pull-based} communication, the \gls{ps} schedules the data transmission timing of \glspl{ue} to collect the current local model. In the \emph{push-based} communication, \glspl{ue} transmits local updates by contending the communication channel with other ready-to-transmit \glspl{ue} for model aggregation at the \gls{ps}. These \glspl{ue} contend to upload local updates once the local training is complete.

\subsection{Standard \gls{fl} Problem}
We assume each \gls{ue} $k \in [K]$ holds $N_k$ data samples in the set $\mathcal{D}_k$ as pairs $\{z_i, y_i\}_{i=1}^{N_k}$; the pair $\{z_i, y_i\}$ indicates $i-$th input-label sample, as in a classification problem. Here,  $[K] = \{1, \ldots, K\}$ is the set of total available UEs in the system. In the \gls{fl} setting, the $
K$ \glspl{ue} collaborate to train a single learning model ${\bf{w}}^*$ at the \gls{ps} by solving the following empirical risk-minimization problem in its standard objective form:
\begin{equation}
    {\bf{w}}^* = \underset{{\bf{w}}}{\arg \min} \; F({\bf{w}}; [K]) := \underset{\bf{w}}{\arg \min} \frac{1}{K} \mathbb{E}[\sum_{k\in[K]}l_k(\bf{w})], \label{eq:fl_problem}
\end{equation}
where $\{l_k\}_{k \in [K]}$ indicates the local objective function at UE $k$, denoted as $l_k({\bf{w}}):= \frac{1}{N_k}\sum_{(z_i, y_i)\in \mathcal{D}_k}l(y_i, \mathbf{f}({\bf{w}}, z_i))$, where $l(\cdot, \cdot)$ is the loss function defined with the input to label mapping function $\mathbf{f}({\bf{w}},z_i)$. Problem \eqref{eq:fl_problem} can be solved effectively iteratively using FedAvg~\cite{mcmahan2017communication}, or employing variants of distributed optimization methods, as indicated in \cite{kairouz2021advances}, with convergence guarantees following standard assumptions on the loss functions. We make similar assumptions on the loss functions: loss functions are $L-$smooth for $L>0$, $\mu-$convex for $\mu>0$, and exhibit bounded gradients and variance, such that $||\nabla F_k({\bf{w}}) - \nabla F({\bf{w}}))||\le \delta$ and  $||\nabla F_k({\bf{w}})) - \nabla F({\bf{w}}))||^2 \le \Delta, \forall k \in [K]$. 

\subsection{Communication Model}
Similar to \cite{cavallero2024coexistence},
we assume a time-slotted framework organized into frames. Each frame comprises a single downlink transmission slot, followed by $M$ uplink time slots. The duration of each frame, denoted as $T_{\mathrm{F}}$, is considered as one global model iteration time $I_g (\theta)$, described as a function of an relative accuracy level $\theta$ attained with the FL training \cite{pandey2020crowdsourcing}. Each global round $I_g (\theta)$ includes time for local model training, transmission of model updates, and model aggregations as received by the PS. Intuitively, a high target accuracy implies more time slots needed for convergence, when feasible, which is constrained by the allowed $I_g (\theta)$ in the presence of stragglers. We denote the length of each slot as $\tau$ [s], which can be expressed as $\tau = \lfloor \frac{T_{\mathrm{F}}}{M} \rfloor$. We also consider that $K$ UEs are active at the beginning of the frame.  

Following the \gls{mac} frame structure and mode of operations~\cite{cavallero2024coexistence}, we divide a time frame $T_\textrm{F}$ into two distinctive parts: 1)  pull-based communication periods $T_\textrm{Q}$, where queried \gls{ue}s transmit their instantaneous local models based on a shared schedule; and 2) push-based communications periods $T_\textrm{C}$, where \glspl{ue} with available local models transmit data following the framed-ALOHA protocol. The pull-based communication periods comprise of exactly $Q$ uplink transmission slots; thus, $T_{\mathrm{Q}}=Q\tau$, while the push-based period comprises the remaining $ M-Q $ slots, yielding $T_{\textrm{C}} = (M-Q)\tau $. Each \gls{ue} $k \in [K]$ has a local model of size $L_k$ [bits] to transmit, which is obtained with the completion of $I_l (\epsilon) \in \mathbb{R}_+$ local iterations following an arbitrary variant of \gls{sgd} to reach a local accuracy of $\epsilon$~\cite{pandey2020crowdsourcing}.
Then, the transmission time required for $L_k$ bits by UE $k$ is $T_{k,\textrm{com}}={L_k}/{R_k}$, where the rate $R_k$ is set to $R_k = {L_k}/{\tau}$, for simplicity, so that the transmission can be adapted within $\tau$. Here, we assume that the downlink communication is error-free while the collision channel is assumed for uplink transmission, where the packet loss will happen if more than one \gls{ue} transmits updates in the same communication slot.

\subsubsection{Scheduling}\label{sec:pull_model} At the beginning of each frame, the \gls{ps} selects the subset of \glspl{ue} for scheduling in the pull-reserved periods. This work applies \emph{utility-based} node selection at the \gls{ps} to quantify $Q$, which is derived as a solution of the subset selection problem formalized in Sec.~\ref{sec:utility_based_selection}. In that sense, $Q$ can be considered the available UE scheduling budget. Then, let us define \gls{ue} selection with indicator variable $\alpha_k \in \{0,1\}, \forall k \in [K]$, where $\alpha_k = 1$ ($\alpha_k = 0$) means that the \gls{ue} $k$ is selected (not selected) for model aggregation in pull-reserved periods. The \gls{ps} strategically schedules $Q$ \glspl{ue}, collected in a set defined by $[N_Q]$ for the pull-based communication such that $\sum_{k \in [N_Q]} \alpha_k = Q$. Following the node-selection, the \gls{ps} broadcasts the available global model and the information on the timing of the push-based communication, as in~\cite{cavallero2024coexistence}.
Upon receiving the global model, scheduled \glspl{ue} undergo local training and transmit the latest model weight in their allocated slots.

\subsubsection{Random Access}\label{sec:push_model}
The \glspl{ue} that does not receive the pull request from the \gls{ps} transmit their data in a shared slot in a push-based manner upon the completion of their calculation, as in \cite{cavallero2024coexistence}. Consider $[N] \subseteq [K]\setminus [N_Q]$ be the subset of \glspl{ue} that completes its local model calculation for transmission within the push slots and participates in the global model update. Deriving $N$ requires modelling the actual distribution of latency, which is non-trivial
given it depends on the heterogeneity in the computing capability of
UEs. Hence, using an approximation approach compatible with the setting,  outlined in Sec.~\ref{sec:time_cost}, we consider $N$ \glspl{ue} transmit their updates following the framed ALOHA protocol, in which $N$ \glspl{ue} randomly selects a single slot from the frame that represents the set of available push random access slots.
\vspace{-0.16in}
\subsection{Computation Model}
For each UE $k\in [K]$, denote $D_k$ [bits] as the size of the local data, $f_k$ as the computing frequency and $c_k$ as the number of CPU cycles to process one bit of the local data, which is a stochastic parameter \cite{suman2023statistical} that reflects the variable compute abilities of UEs. \rv{Hence, we use $c_k$ to capture system-level heterogeneity in our model. For this, consider an \gls{rv} $\textbf{C}$ that follows a Gamma distribution with shape and scale parameters, respectively, $\kappa$ and $\beta$. Then, $c_k$ can be sampled from
\begin{equation}
 f_\textbf{C}(c) = \frac{1}{\beta^\kappa\Gamma(\kappa, \beta)}c^{\kappa -1}\exp(-c/\beta),   
\end{equation}
where $\Gamma(a,b) = \int_0^{b} t^{a-1}e^{-t}dt$ is the Gamma function, which is a simpler parametric model for this purpose, as used in related literature (e.g., see \cite{suman2023statistical}).} Then, the time to complete one iteration of local model training by UE $k$ is ${D_kc_k}/{f_k}$. Recall and denote $I_{k,l}(\epsilon)$ as the number of local iterations offering theoretical guarantees to converge to some fixed $\epsilon$ accuracy\footnote{We refer to \cite{pandey2020crowdsourcing} for a precise definition of local and global accuracy level, $\theta$ and $\epsilon$, respectively.}; then, the total time spent by the UE $k$ for the local model update, following arbitrary variant of SGD \cite{mcmahan2017communication}, is $T_{k, \textrm{comp}} = I_{k,l}(\epsilon){D_kc_k}/{f_k}.$ Note that the number of iterations $I_{k,l}(\epsilon)$ is the lower bound to $\mathcal{O}(\log({1}/{\epsilon}))$. This brings forward the following remark and a Theorem on latency bound. 
\begin{remark}
\rv{The latency cost $T_k$ incurred due to local computations for each \gls{ue} $k \in [K]$ depends on the training dataset size $D_k$, computing frequency $f_k$, and available CPU cycles. It is an independent continuous \gls{rv} such that $p\log\left(\frac{1}{\epsilon}\right) \leq T_k \leq 2T_{\textrm{F}}$ 
almost surely, where $p > 0$ scales the time cost of local iterations $I_{k,l}(\epsilon)$.}
\end{remark}
  \begin{theorem} (Latency Bound for Local Training) Given the set of RVs defining the incurred latency for local model training $T_1, T_2, \ldots, T_k, \forall k \in [K]$, the average local latency cost $T^l_\textrm{cost}$ for a target local-global accuracy  pairs $(\epsilon, \theta)$ convergence can be bounded as 
\begin{equation}
    \mathbb{E}[T^l_\textrm{cost}] \le \frac{1}{K}\sum_{k=1}^KT_k + \log\bigg(\frac{1}{\epsilon^2}\bigg)(2q-p(1-\theta))  \sqrt{\frac{\log(1/h)}{2}}
\end{equation}  
where $h=  \exp \Bigg(-\frac{2T_{\max}^2}{K(2T_{\mathrm{F}}-p\log(1/{\epsilon}))^2}\Bigg)$, \rv{$q > 0$ is a scaling constant associated with the number of global iteration, i.e., $I_g(\theta)$ to the target accuracy $\theta$ with $\epsilon$ local accuracy and the time-budget $T_{\max}$, and term $T_\textrm{F}$ accounts for the protocol design to accommodate $T^l_\textrm{cost} = 2T_\textrm{F}$. The scaling constants $p, q$ depends on data size and the condition number of the local learning problem at UEs~\cite{konevcny2017semi}.}
\end{theorem}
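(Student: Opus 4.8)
The plan is to recognize the stated bound as a Hoeffding-type concentration result for the sum of the independent, bounded latency variables guaranteed by the Remark, rescaled by the iteration complexity of the training. First I would invoke the Remark to fix, for every $k\in[K]$, the almost-sure range $T_k\in[p\log(1/\epsilon),\,2T_{\mathrm{F}}]$, so that each $T_k$ is an independent continuous random variable supported on an interval of width $b_k-a_k = 2T_{\mathrm{F}}-p\log(1/\epsilon)$. Independence follows because the per-UE compute times are driven by the independently sampled CPU-cycle parameters $c_k\sim\textbf{C}$, so the hypotheses of Hoeffding's inequality are met without further work.

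Next I would apply Hoeffding's inequality to the aggregate latency $S=\sum_{k=1}^{K}T_k$. With $\sum_{k}(b_k-a_k)^2 = K\,(2T_{\mathrm{F}}-p\log(1/\epsilon))^2$, the one-sided tail bound at deviation level $T_{\max}$ reads $\Pr\!\big(S-\mathbb{E}[S]\ge T_{\max}\big)\le \exp\!\big(-2T_{\max}^2/[K(2T_{\mathrm{F}}-p\log(1/\epsilon))^2]\big)$, which is exactly the quantity $h$ in the statement. Thus $h$ is identified as the probability that the total local-training latency overshoots its mean by the available time budget $T_{\max}$, and inverting the bound shows that, at confidence $1-h$, the (range-normalized) deviation of the empirical mean $\tfrac1K\sum_k T_k$ from $\mathbb{E}[\tfrac1K\sum_k T_k]$ is controlled by $\sqrt{\log(1/h)/2}$. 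This is the origin of the second factor in the deviation term.

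It remains to convert this per-instance concentration statement into the total-cost bound and to produce the prefactor. I would pass from the high-probability bound to the expectation $\mathbb{E}[T^l_\mathrm{cost}]$ by integrating the Hoeffding tail (equivalently, using $\mathbb{E}[X]=\int_0^\infty\Pr(X>t)\,dt$ for the nonnegative overshoot), which preserves the additive structure ``empirical mean $+$ deviation''. The scalar prefactor $\log(1/\epsilon^2)(2q-p(1-\theta))$ then comes from the iteration complexity: each local update requires $I_{k,l}(\epsilon)=\mathcal{O}(\log(1/\epsilon))$ iterations---contributing the $\log(1/\epsilon^2)=2\log(1/\epsilon)$ factor through $p$---while reaching global accuracy $\theta$ requires $I_g(\theta)$ rounds scaled by $q$; combining the local scaling $p(1-\theta)$ with the global scaling $2q$ yields $(2q-p(1-\theta))$. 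Multiplying the per-round deviation by this accumulated iteration count and adding the empirical mean gives the claimed inequality.

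I expect the concentration step to be routine; the real obstacle is the bookkeeping of the prefactor $\log(1/\epsilon^2)(2q-p(1-\theta))$, i.e., correctly propagating the local accuracy target $\epsilon$ through $I_{k,l}(\epsilon)$ and the global accuracy target $\theta$ through $I_g(\theta)$, and pinning down the constants $p,q$ (which, per~\cite{konevcny2017semi}, depend on the data size and the condition number of the local problem) so that the two scalings combine in precisely the stated form. A secondary subtlety is making the expectation-versus-empirical-mean conversion rigorous, since the right-hand side mixes the deterministic budget $T_{\max}$ (inside $h$) with the random empirical average $\tfrac1K\sum_k T_k$; I would handle this by conditioning on the realized compute times and bounding the residual expected overshoot via the tail integral above.
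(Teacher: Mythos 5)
Your proposal follows essentially the same route as the paper, whose proof is only a sketch: invoke Remark~1 to obtain independent latencies bounded in $[p\log(1/\epsilon),\,2T_{\mathrm{F}}]$, apply Hoeffding's inequality to identify $h$ as the tail probability at deviation $T_{\max}$, and attribute the prefactor $\log(1/\epsilon^2)(2q-p(1-\theta))$ to the bound on the minimum local iterations needed for $\epsilon$-accuracy combined with the global iteration count $I_g(\theta)$. Your elaboration of the expectation-versus-empirical-mean conversion and the constant bookkeeping goes beyond what the paper records, but it is consistent with the authors' stated argument.
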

\begin{proof}
Following Remark 1, the proof can be established with Hoeffding's inequality measure \cite{boucheron2003concentration} and a bound on the minimum local iterations required to attain $\epsilon$ target accuracy. Details omitted for brevity.
\end{proof}
\subsection{Time cost}\label{sec:time_cost}
Next, we derive the time cost for each global iteration to formalize the problem. In each frame, the participation nodes for the model update spend time on communication and computation. Let $[Y]= [N_Q] \cup [N']$ be a set of UEs involved in one round of global model update through strategic pull and random push, with $N'$ UEs successful in the transmission of their local model parameters.  \rv{Here, we assume no retransmissions in case of failures such that the subsequent queuing delay is ignored.} Then, considering the frame structure, the execution time for the global iteration depends on the maximum amount of time required for all successful updates from $[Y]$ \glspl{ue}, which is proportional to $T_{\textrm{cost}}:= \max\{T_{k, \textrm{comp}} + T_{k, \textrm{com}}| k \in [Y]\}.$
The time cost for \gls{ue} $k \in [N_Q]$ depends on the scheduled slots: when it transmits $\zeta$-th pull slots, its time cost can be $T_k = \zeta +1$. On the other hand, the time cost for \gls{ue} $k \in [N^{\prime}]$, denoted as $T_k$, as follows: 
\begin{equation}
T_k=\begin{cases}
			Q +1, & \text{if}~T_{k, \textrm{comp}} \le  T_{Q} \\
          \ceil{\frac{T_{k, \textrm{comp}}}{\tau}} +1 , &  \text{otherwise.} \label{eq:dis_time}
		 \end{cases}  
\end{equation}
Then, the total latency cost of receiving $[Y]$ updates is proportional to $T_{\textrm{cost}}:= \max\{T_k| k \in [Y]\}$. Note that in practice $T_k$ has to be treated probabilistically due to the RV $c_k$.

Here, the number of successful updates from push devices $N'$ requires modeling the actual distribution of latency, which is non-trivial as it depends on the heterogeneity in computing capability of \glspl{ue}. As the focus of this paper is characterizing the system-level performance in the push-pull communication regimes in a utility-based \gls{fl} training framework, we analyze the total time cost given $N$. Given $N$ and channel $M$ slots, the probability that a single user succeeds in data transmission can be described as $p_\mathrm{s}=\left(1-\frac{1}{M-Q}\right)^{N-1}$~\cite{cavallero2024coexistence}. We then use the discrete version of order statistics to derive the expected maximum time cost, to attain the target accuracy level, in which the index of transmission slot is a random variable and each node chooses the index i.i.d. uniform distribution. \rv{Mathematically, given $Q$ slots available for pull, the expectation of the overall communication cost per communication round can be expressed as 
\begin{equation}
\Bar{T}_\textrm{cost} := \mathbb{E}[T_{f|Q}] = Q(1-p_s)^N+\sum_{i=1}^{N}(Q+\ell_{\iota})p_s(1-p_s)^{i-1}, \label{eq:timecost}
\end{equation}
where $\ell_{\iota}$ is the average slot index of $\iota$-th highest index. Here we approximately calculate the mean value as $\ell_{\iota} \approx (M-Q+1)\frac{N-\iota+1}{N+1}$, based on the order statistics of a continuous random variable. This approximation becomes exact as $M-Q$ becomes large.}
\vspace{-20pt}
\subsection{Overall problem definition}
The overall problem is a subset selection problem with constraints on the cost of learning in terms of the incurred latency cost to achieve a target accuracy level $\theta_\textrm{th}$. 
\begin{align}
	{\textbf{P:}} \min_{\boldsymbol{\alpha}, \bf{w}} \quad 
	& \Bigg[\max_{} \sum_{k \in [Q]} \ F(\bf{w}; [Q])  \Bigg],  \label{objective_func}\\
	\text{s.t.} \quad & \sum_{k=1}^K \alpha_k \le Q, \\
     & \alpha_k \in \{0,1\}, \forall k \in [K], \label{cons:participation} \\	
     & \Bar{T}_\textrm{cost} \le T_{\max}. \label{cons:latency_budget}
\end{align}
In a nutshell, the \gls{ue} selection problem $(\textbf{P})$ is challenging, particularly due to the introduction of push \gls{ra} periods, which impact the overall transmission cost for model training. Specifically, given constraints on the available channel uses $M$ per global iterations, i.e., given the latency budget for learning, the \gls{ps} aims to strategically pull the most contributing $Q$ updates in each round while assessing the impact of the \gls{ra} procedure in the overall model training. We resort to a valuation-based utility design in the PS (detailed in Sec.~\ref{sec:utility_based_selection}) to effectively solve $(\textbf{P})$ and understand the interplay between push and pull communication for training a model to a target accuracy level at the edge.

\section{Utility-based UE Scheduling Algorithms}\label{sec:utility_based_selection}
Here, we outline the procedure for strategic UE scheduling, which impacts random access policies under the push-pull coexistence regime. Push updates are aggregated at the end of each frame, while scheduling in each communication round $I_g$ is based on the valuation of received updates.
\begin{figure}[t!]
\begin{algorithm}[H]{\fontsize{8.5pt}{8.9pt}\selectfont
\caption{Strategic FL Training using GTG-Shapley }\label{algo:algo1}
 \textbf{Input}: $K$ clients with datasets $\{z_i, y_i\}_{i=1}^{K}$, validation dataset $\mathcal{D}_{\textrm{val}}:\{z_i, y_i\}_{i=1}^{N_\textrm{val}}$, initial model weight ${\bf{w}}^{(0)}$, number of channel uses $M$, time budget $T_{\max}$, UE scheduling strategy $Q$, exponential rate $\zeta$.\\
\textbf{Hyperparameters:} Training epochs per round $E$, mini-batches per training epoch $B$, learning rate $\eta$, momentum $\vartheta$ \ \\
\textbf{Output}: Global model $\bf{w}$.\\
\textbf{Initialise}: Broadcast $\bf{w}^{(0)}$, Client selection $S_k = 0,\, \forall\, k \in [K]$
\begin{algorithmic}[1]
\For {$t = 0,\, 1,\, 2,\, \cdots,\, \lceil\frac{T_{\max}}{I_g} \rceil -1$}
\If {$t < \lceil \frac{K}{M} \rceil$}
\State $Q_t = \{M,\, (t+1)M,\, \cdots\}$ \Comment{Full pass on UEs}
\Else 
\State $Q_t = \max_Q\{\nu_k: \forall k\in Q_t\}$ \Comment{Greedy selection}
\EndIf

\For{client $k$ in $Q_t$}
\State $\bf{w}_{k}^{(t+1)} = $ ClientUpdate$(\mathcal{D}_k,\, \bf{w}^{(t)};\,E,\, B,\, \eta,\, \vartheta)$
\EndFor
\State ${\bf{w}}_y^{(t+1)}: y\in [N'] $ \Comment{Collected updates after RA}
\State $\bf{w}^{(t+1)}= $ModelAverage$(N_k,\, {\bf{w}}_k^{(t+1)}, {\bf{w}}_y^{(t+1)}: k \in Q_t)$
\State $\{\nu_k^{(t)}\}_{k \in Q_t} = $ GTG-Shapley$(\bf{w}^{(t)},\,\{\bf{w}_{k}^{(t+1)}\},\, \mathcal{D}_{\textrm{val}})$
\For{client $k$ in $Q_t$}
\Statex \hspace*{\algorithmicindent}\hspace*{\algorithmicindent}$\nu_k \gets \zeta\nu_k\, +\, (1 - \zeta)\nu_k^{(t)}$ \Comment{Exponential averaging}
\EndFor
\EndFor \\
\textbf{return} $\bf{w}^t$

\end{algorithmic}
}
\end{algorithm}
\vspace{-2\baselineskip}
\end{figure}

\textbf{Utility-based UE selection:} Let $U(\cdot)$ denote a utility function on $2^{[M]} \to \mathbb{R}$, which associates a reward/value with every subset of clients. In principle, the utility $U(\cdot)$ takes the performance of the instantaneous model \cite{singhal2024greedy} obtained by soliciting local models from UEs in each global iteration $I_g$, at the PS. For this, the \gls{ps} uses the validation data $\mathcal{D}_{\textrm{val}}:\{z_i, y_i\}_{i=1}^{N_\textrm{val}}$.
Given [Y], the \gls{sv} of \gls{ue} $k\in[Y]$ is defined: 
\begin{equation}
    \nu_k = \frac{1}{Y}\sum_{S \in [Y]\setminus k} \frac{U(S \cup k) - U(S)}{\binom{Y-1}{|S|}}, \label{eq:sv} 
\end{equation}
where local updates from $Y$ \glspl{ue} are available at the PS to derive the marginal contributions. \rv{Even though a PS is often equipped with a high computational capability, the complexity of evaluating \eqref{eq:sv} cannot be overlooked as the number of UEs grows, leading to intractable solutions. This is particularly due to the combinatorial nature of the problem, demanding a single-round computational complexity of $\mathcal{O}(Y\log (Y))$. We utilize GTG-Shapley \cite{liu2022gtg} which reduces complexity to $\mathcal{O}(\log Y)$ by implementing the truncated Monte Carlo sampling to approximate \eqref{eq:sv} efficiently. The details are given in Algorithm.~\ref{algo:algo1}}. 

\vspace{-10pt}
\section{Simulation Results}
We have simulated the push-pull interaction learning environment on a single server with 26 core Intel Xeon 2.6 GHz, 256 GB RAM, 4 TB, Nvidia V100 GPU, Ubuntu OS, with a total of $K = 200$ \gls{ue} and conducted extensive experiments on classification tasks with well-known MNIST~\cite{lecun1998gradient} and CIFAR10 \cite{krizhevsky2009learning} datasets. Unless specified, the experiments are conducted on the MNIST dataset. A multilayer perceptron (MLP) classifier and a convolutional neural network (CNN) were respectively trained for MNIST and CIFAR-10 datasets. Similar to \cite{singhal2024greedy}, data heterogeneity is introduced by distributing training samples across clients using a Dirichlet ($\alpha$) distribution with $\alpha = 1$, ensuring a moderate skew in label distribution. Systems heterogeneity is modelled by selecting a fraction $0.5$ of clients as stragglers, which means these clients transmit partial solutions by training for a randomly chosen number of epochs in $[1, E]$.
We set $E = 200$. Privacy heterogeneity is incorporated by assigning varying noise levels to client updates, where the noise level $\sigma$ follows $\sigma_k = \frac{(k-1)\sigma}{N}$ for each client, set at $\sigma = 0.1$ to represent minimal privacy variance. We set the available time resources per frame $M = 20$. The default value of $T_{\max}$ is set $M \tau$. 
As a benchmark scheme, we apply three different model aggregation methods: 1) \textbf{FedAvg}~\cite{mcmahan2017communication}, where PS randomly selects $M$ UEs per round, disregarding the device-specific impact on model performance, 2) \textbf{GreedyShap}~\cite{singhal2024greedy}, where PS selects $M$ devices with the highest Shapley values, prioritizing devices that maximize model contribution, and 3) \textbf{Centralized}, where 10\% of device send data directly to the server for model training. 
The \textbf{Proposed} approach follows Algorithm~\ref{algo:algo1} and dedicates all slots in the first $r_{\mathrm{th}} = 20$ [frames] for pull operation to initialize the SVs; then, it reverts to the fixed 50\% pull and 50\% push split. 

\input{fig2_different_setting}


Fig.~\ref{fig:com_cost} presents the accuracy of our proposed methods under different configurations of pull/push slot allocations, along with baseline methods, \textbf{FedAvg} and \textbf{Centralized}. The range for $r < r_{\mathrm{th}}$ represents the initial phase in which all slots are allocated exclusively to the pull scheme. After this phase, i.e., $r \geq r_{\mathrm{th}}$, the system transitions to the \textit{Pull+Push} configuration, allowing certain slots to accommodate spontaneous push-based updates from UEs. \rv{The results indicate that the \textbf{Proposed} (5/10, 10/10, and 20/10) push/pull configuration achieves a competitive performance with \textbf{GreedyShap},} and higher accuracy and faster convergence than other baseline methods. In contrast, the \textbf{FedAvg}  approach shows slower convergence, while the \textbf{Centralized}, though achieving high accuracy, requires more communication rounds to stabilize. These results highlight the advantages of combining pull and push mechanisms to enhance communication efficiency and model accuracy in federated learning. 
\input{fig1_acc}
Fig.~\ref{fig_convergency} demonstrates the superior performance of the proposed method in MNIST and CIFAR-10 datasets. The \textbf{Proposed} and \textbf{GreedyShap} converge faster (around iteration 25) and offer a stable performance as compared with the baselines. In contrast, while reaching a similar final performance, the \textbf{Centralized} method converges more slowly and stabilizes only after iteration 50. As expected, the \textbf{FedAvg} method performs poorly due to its lack of structured decision-making. The proposed method, however, strikes a balance between exploration and exploitation, offering greater flexibility while maintaining high performance. Unlike \textbf{GreedyShap}, which consistently selects the best candidates based on immediate potential, however, might miss potential updates for improved generalization performance due to limited exploration. This balance is crucial for maintaining both flexibility and high learning/training efficiency.
\begin{figure}[t!]
\begin{tikzpicture}

\definecolor{crimson2143940}{RGB}{214,39,40}
\definecolor{darkgray176}{RGB}{176,176,176}
\definecolor{darkorange25512714}{RGB}{255,127,14}
\definecolor{forestgreen4416044}{RGB}{44,160,44}
\definecolor{lightgray204}{RGB}{204,204,204}
\definecolor{mediumpurple148103189}{RGB}{148,103,189}
\definecolor{sienna1408675}{RGB}{140,86,75}
\definecolor{steelblue31119180}{RGB}{31,119,180}

\begin{axis}[
width=0.8\linewidth,        
height=0.4\linewidth,   
scale only axis,         
legend cell align={left},
legend style={
  fill opacity=0.8,
  draw opacity=1,
  text opacity=1,
  at={(0.91,0.4)},
  anchor=east,
  draw=lightgray204
},
tick align=outside,
tick pos=left,
xtick={50,100, 200, 300, 400},
xticklabels={5,10, 20, 30, 40},
x grid style={darkgray176},
xlabel={Number of push UEs $N$},
xmin=32.5, xmax=417.5,
xtick style={color=black},
y grid style={darkgray176},
ylabel={Test accuracy },
ymin=0.878813001811504, ymax=0.952447005808354,
ytick style={color=black}
]
\addplot [semithick, steelblue31119180, mark=asterisk, mark size=2, mark options={solid}, line width =1]
table {%
50 0.949100005626679
100 0.943879997730255
200 0.938919991254807
400 0.938
};
\addlegendentry{GreedyShap}
\addplot [semithick, forestgreen4416044, mark=square, mark size=2, mark options={solid}, line width =1]
table {%
50 0.886860001087189
100 0.885619992017746
200 0.88345999121666
400 0.882160001993179
};
\addlegendentry{FedAvg}
\addplot [semithick, crimson2143940, mark= triangle, mark size=2, mark options={solid}, line width =1]
table {%
50 0.944900000095367
100 0.943680000305176
200 0.939339995384216
400 0.937680006027222
};
\addlegendentry{Proposed}
\addplot [semithick, mediumpurple148103189, mark=pentagon, mark size=2, mark options={solid}, line width =1]
table {%
50 0.936280000209808
100 0.935339999198914
200 0.935320001840591
400 0.934739995002747
};
\addlegendentry{Centralized}
\addplot [semithick, sienna1408675, mark=+, mark size=2, mark options={solid}, line width =1]
table {%
50 0.941244436634911
100 0.93802222278383
200 0.936579996347427
400 0.936155564255185
};
\addlegendentry{Only pull}
\end{axis}

\end{tikzpicture}
\vspace{-0.2cm}
\caption{Test accuracy vs the number of push UEs per frame.} 
\label{fig:acc_vs_no_users}
\vspace{-0.5cm}
\end{figure}
\begin{figure}[t]
    \centering
\begin{tikzpicture}

\definecolor{darkgray176}{RGB}{176,176,176}
\definecolor{darkred17600}{RGB}{176,0,0}
\definecolor{forestgreen4416044}{RGB}{44,160,44}
\definecolor{lightgray204}{RGB}{204,204,204}
\definecolor{steelblue31119180}{RGB}{31,119,180}
\begin{axis}[
width=0.8\linewidth,        
height=0.4\linewidth,   
scale only axis, 
legend cell align={left},
legend style={
  fill opacity=0.8,
  draw opacity=1,
  text opacity=1,
  at={(0.03,0.97)},
  anchor=north west,
  draw=lightgray204
},
tick align=outside,
tick pos=left,
x grid style={darkgray176},
xlabel={Target accuracy $\theta_{\mathrm{th}}$},
xmin=-0.53, xmax=8.93,
xtick style={color=black},
xtick={0.2,1.2,2.2,3.2,4.2,5.2,6.2,7.2,8.2},
xticklabels={0.4,0.5,0.6,0.7,0.8,0.9,0.91,0.92,0.93},
y grid style={darkgray176},
ymin=0, ymax=15000,
ytick={5000, 10000, 15000},
yticklabels={5000, 10000, 15000},
ylabel style={font=\color{white!15!black}},
ylabel={\rv{Time cost [slots]}},
scaled ticks=false,
axis background/.style={fill=none},
]
\draw[draw=none,fill=forestgreen4416044] (axis cs:-0.1,0) rectangle (axis cs:0.1,2760);
\addlegendimage{ybar,ybar legend,draw=none,fill=forestgreen4416044}
\addlegendentry{FedAvg}

\draw[draw=none,fill=forestgreen4416044] (axis cs:0.9,0) rectangle (axis cs:1.1,3220);
\draw[draw=none,fill=forestgreen4416044] (axis cs:1.9,0) rectangle (axis cs:2.1,3680);
\draw[draw=none,fill=forestgreen4416044] (axis cs:2.9,0) rectangle (axis cs:3.1,4600);
\draw[draw=none,fill=forestgreen4416044] (axis cs:3.9,0) rectangle (axis cs:4.1,7130);
\draw[draw=none,fill=forestgreen4416044] (axis cs:4.9,0) rectangle (axis cs:5.1,0);
\draw[draw=none,fill=forestgreen4416044] (axis cs:5.9,0) rectangle (axis cs:6.1,0);
\draw[draw=none,fill=forestgreen4416044] (axis cs:6.9,0) rectangle (axis cs:7.1,0);
\draw[draw=none,fill=forestgreen4416044] (axis cs:7.9,0) rectangle (axis cs:8.1,0);
\draw[draw=none,fill=steelblue31119180] (axis cs:0.1,0) rectangle (axis cs:0.3,460);
\addlegendimage{ybar,ybar legend,draw=none,fill=steelblue31119180}
\addlegendentry{GreedyShap}

\draw[draw=none,fill=steelblue31119180] (axis cs:1.1,0) rectangle (axis cs:1.3,460);
\draw[draw=none,fill=steelblue31119180] (axis cs:2.1,0) rectangle (axis cs:2.3,690);
\draw[draw=none,fill=steelblue31119180] (axis cs:3.1,0) rectangle (axis cs:3.3,690);
\draw[draw=none,fill=steelblue31119180] (axis cs:4.1,0) rectangle (axis cs:4.3,1150);
\draw[draw=none,fill=steelblue31119180] (axis cs:5.1,0) rectangle (axis cs:5.3,3910);
\draw[draw=none,fill=steelblue31119180] (axis cs:6.1,0) rectangle (axis cs:6.3,5520);
\draw[draw=none,fill=steelblue31119180] (axis cs:7.1,0) rectangle (axis cs:7.3,8280);
\draw[draw=none,fill=steelblue31119180] (axis cs:8.1,0) rectangle (axis cs:8.3,14260);
\draw[draw=none,fill=darkred17600] (axis cs:0.3,0) rectangle (axis cs:0.5,460);
\addlegendimage{ybar,ybar legend,draw=none,fill=darkred17600}
\addlegendentry{Proposed}

\draw[draw=none,fill=darkred17600] (axis cs:1.3,0) rectangle (axis cs:1.5,460);
\draw[draw=none,fill=darkred17600] (axis cs:2.3,0) rectangle (axis cs:2.5,690);
\draw[draw=none,fill=darkred17600] (axis cs:3.3,0) rectangle (axis cs:3.5,690);
\draw[draw=none,fill=darkred17600] (axis cs:4.3,0) rectangle (axis cs:4.5,1150);
\draw[draw=none,fill=darkred17600] (axis cs:5.3,0) rectangle (axis cs:5.5,3910);
\draw[draw=none,fill=darkred17600] (axis cs:6.3,0) rectangle (axis cs:6.5,3910);
\draw[draw=none,fill=darkred17600] (axis cs:7.3,0) rectangle (axis cs:7.5,4859);
\draw[draw=none,fill=darkred17600] (axis cs:8.3,0) rectangle (axis cs:8.5,10283);
\end{axis}

\end{tikzpicture}
    \vspace{-0.5cm}
    \caption{Total time cost for different levels of target accuracy.}
    \label{fig:latency_cost}
    \vspace{-0.5cm}
\end{figure}

Fig.~\ref{fig:acc_vs_no_users} offers a comparative analysis of the impact of increasing the number of push \glspl{ue} $N$ on the test accuracy. As observed, increasing the number of push \glspl{ue} lowers the test accuracy due to frequent collision; however, our proposed method outperforms other baselines and offers competitive performance as compared with \textbf{GreedyShap}, which selects \glspl{ue} based on data valuation. Finally, Fig.~\ref{fig:latency_cost} presents the result of the overall time cost required, as defined mathematically in \eqref{eq:timecost}, to achieve different levels of target accuracy $\theta_{\mathrm{th}}$ for different methods. We evaluate this as the minimum number of time units (in terms of $\tau$) required to hit the target accuracy. Note that this equivalent measure of the cumulative latency cost is considered to simplify our analysis. As we do not have full statistics on the per-device computation, we cannot guarantee local training completion per frame required for further analysis of latency cost. \textbf{FedAvg} incurs the highest latency and fails to achieve accuracy above 0.9. In contrast, \textbf{Proposed} achieves high accuracy with lower latency than both \textbf{GreedyShap} and \textbf{FedAvg}, demonstrating superior efficiency. Notably, for accuracy above 0.9, \textbf{GreedyShap} experiences a sharp rise in latency, requiring significantly more time.
\vspace{-5pt}
\section{Conclusions}
This work investigated a communication-efficient FL training procedure under the push-pull communication paradigm. We considered a utility-based approach to schedule relevant model updates in the pull phase while exploiting the random access procedure in the push phase to handle diverse UE participation. An analytical model captures the protocol design choices for time-constrained FL, considering the dependency between local update success in the push phase and the PS's scheduling in the pull phase. 
Experimental analysis showed the proposed aggregation strategy obtains the target accuracy with minimal latency cost compared to baseline schemes.
\vspace{-5pt}
\setstretch{0.85}
\bibliographystyle{ieeetr}
\bibliography{netl}

\begin{thebibliography}{10}

\bibitem{mcmahan2017communication}
B.~McMahan {\em et~al.}, ``Communication-efficient learning of deep networks
  from decentralized data,'' in {\em Artif. Intell. Statist.}, pp.~1273--1282,
  PMLR, 2017.

\bibitem{nguyen2021federated}
D.~C. Nguyen {\em et~al.}, ``Federated learning for internet of things: A
  comprehensive survey,'' {\em IEEE Commun. Surv. Tut.}, vol.~23, no.~3,
  pp.~1622--1658, 2021.

\bibitem{kountouris2021semantics}
M.~Kountouris and N.~Pappas, ``Semantics-empowered communication for networked
  intelligent systems,'' {\em IEEE Commun. Mag.}, vol.~59, no.~6, pp.~96--102,
  2021.

\bibitem{kairouz2021advances}
P.~Kairouz {\em et~al.}, ``Advances and open problems in federated learning,''
  {\em Found. Trends Mach Learn.}, vol.~14, no.~1--2, pp.~1--210, 2021.

\bibitem{singhal2024greedy}
P.~Singhal, S.~R. Pandey, and P.~Popovski, ``Greedy shapley client selection
  for communication-efficient federated learning,'' {\em IEEE Netw. Lett.},
  vol.~6, no.~2, pp.~134--138, 2024.

\bibitem{reisizadeh2022straggler}
A.~Reisizadeh {\em et~al.}, ``Straggler-resilient federated learning:
  Leveraging the interplay between statistical accuracy and system
  heterogeneity,'' {\em IEEE J. Sel. Areas Inf. Theory}, vol.~3, no.~2,
  pp.~197--205, 2022.

\bibitem{zhang2022federated}
T.~Zhang {\em et~al.}, ``Federated learning for the internet of things:
  Applications, challenges, and opportunities,'' {\em IEEE Internet Things
  Mag.}, vol.~5, no.~1, pp.~24--29, 2022.

\bibitem{pandey2020crowdsourcing}
S.~R. Pandey {\em et~al.}, ``A crowdsourcing framework for on-device federated
  learning,'' {\em IEEE Trans. Wireless Commun.}, vol.~19, no.~5,
  pp.~3241--3256, 2020.

\bibitem{cavallero2024coexistence}
S.~Cavallero {\em et~al.}, ``Coexistence of pull and push communication in
  wireless access for {IoT} devices,'' in {\em 2024 IEEE 25th Int. Workshop
  Signal Process. Adv. Wireless Commun. (SPAWC)}, pp.~841--845, 2024.

\bibitem{suman2023statistical}
S.~Suman {\em et~al.}, ``Statistical characterization of closed-loop latency at
  the mobile edge,'' {\em IEEE Trans. Commun.}, vol.~71, no.~7, pp.~4391--4405,
  2023.

\bibitem{konevcny2017semi}
J.~Kone{\v{c}}n{\`y}, Z.~Qu, and P.~Richt{\'a}rik, ``Semi-stochastic coordinate
  descent,'' {\em optimization Methods and Software}, vol.~32, no.~5,
  pp.~993--1005, 2017.

\bibitem{boucheron2003concentration}
S.~Boucheron, G.~Lugosi, and O.~Bousquet, ``Concentration inequalities,'' in
  {\em Summer school on machine learning}, pp.~208--240, Springer, 2003.

\bibitem{liu2022gtg}
Z.~Liu {\em et~al.}, ``{GTG}-shapley: Efficient and accurate participant
  contribution evaluation in federated learning,'' {\em ACM Trans. intell.
  Syst. Technol. (TIST)}, vol.~13, no.~4, pp.~1--21, 2022.

\bibitem{lecun1998gradient}
Y.~LeCun {\em et~al.}, ``Gradient-based learning applied to document
  recognition,'' {\em Proc. IEEE}, vol.~86, no.~11, pp.~2278--2324, 1998.

\bibitem{krizhevsky2009learning}
A.~Krizhevsky, ``Learning multiple layers of features from tiny images,'' {\em
  Tech. Rep.}, 2009.

\end{thebibliography}
\end{document}